\newcommand{\vect}[1]{\boldsymbol{\mathbf{#1}}}
\newcommand{\nbhd}[1]{\mathcal{N}_{#1}}
\newcommand*{\Prob}{\mathsf{Pr}}
\newcommand{\pp}{\textup{p.p.}}
\newcommand{\wbar}{\overline{\left| w \right|}}
\newcommand{\Gate}[1]{\textsc{#1}}
\newcommand{\zgate}{\Gate{z}}
\newcommand{\xgate}{\Gate{x}}
\newcommand{\idgate}{\Gate{i}}
\newtheorem{theorem}{Theorem}
\title{Parameter Transfer for Quantum Approximate Optimization\\ of Weighted MaxCut}
\author{%
    Ruslan Shaydulin$^{\dagger}$\footnotemark, Phillip C. Lotshaw$^{\ddag *}$, Jeffrey Larson$^{\P}$, James Ostrowski$^{\S}$, and Travis S. Humble$^{\ddag ||}$ %
    }
\date{$^{\dagger}${\small JPMorgan Chase, New York, NY, USA} \\
$^{\ddag}${\small Quantum Computational Science Group, Oak Ridge National Laboratory, Oak Ridge, TN 37830} \\
$^{\P}${\small Mathematics and Computer Science Division, Argonne National Laboratory, Lemont, IL 60439} \\
$^{\S}${\small Department of Industrial and Systems Engineering, University of Tennessee at Knoxville, Knoxville, TN 37996}\\
$^{||}${\small Quantum Science Center, Oak Ridge National Laboratory, Oak Ridge, TN 37830}\\
}
\begin{document}

\floatsetup[figure]{style=plain,subcapbesideposition=top}

\maketitle
\begin{abstract}
Finding high-quality parameters is a central obstacle to using the quantum approximate optimization algorithm (QAOA). Previous work partially addresses this issue for QAOA on unweighted MaxCut problems by leveraging similarities in the objective landscape among different problem instances. However, we show that the more general weighted MaxCut problem has significantly modified objective landscapes, with a proliferation of poor local optima. Our main contribution is a simple rescaling scheme that overcomes these deleterious effects of weights. We show that for a given QAOA depth, a single ``typical'' vector of QAOA parameters can be successfully transferred to weighted MaxCut instances. This transfer leads to a median decrease in the approximation ratio of only 2.0 percentage points relative to a considerably more expensive direct optimization on a dataset of 34,701 instances with up to 20 nodes and multiple weight distributions. This decrease can be reduced to 1.2 percentage points at the cost of only 10 additional QAOA circuit evaluations with parameters sampled from a pretrained metadistribution, or the transferred parameters can be used as a starting point for a single local optimization run to obtain approximation ratios equivalent to those achieved by exhaustive optimization in $96.35\%$ of our cases.
\end{abstract}

\footnotetext{$^{*}$Both authors contributed equally to this work and are ordered randomly.}

\section{Introduction}

With the capabilities of quantum computers rapidly improving~\cite{arute2019quantum,Google2021QAOA}, researchers widely believe that quantum algorithms may soon become competitive with classical state-of-the-art methods on scientifically relevant problems~\cite{Alexeev2021}. A promising application domain for near-term quantum computers is combinatorial optimization, with the quantum approximate optimization algorithm (QAOA)~\cite{Hogg2000,farhi2014quantum} as a leading candidate algorithm. QAOA is a hybrid quantum-classical algorithm that combines a parameterized quantum evolution with a classical method for obtaining high-quality parameters so that the measurement outcomes of the parameterized quantum state correspond to good solutions for the target combinatorial optimization problem.

The performance of QAOA depends crucially on the choice of these parameters. The problem of identifying good QAOA parameters has attracted considerable interest, with large bodies of theoretical \cite{wurtz2021cd,2110.10685,Farhi2020SK,Wurtz2021Bounds}
and computational  \cite{Lykov2020tensorqaoa,Medvidovic2021QAOA54qubit,Shaydulin2020Symmetries,Shaydulin2020CaseStudy,zhou2020quantum,crooks2018performance,shaydulin2019multistart,Shaydulin2021Exploiting}
work. Several recent results show that as a function of the parameters, the QAOA energy landscape is nearly instance independent for typical problem instances that come from a reasonable distribution. This makes it possible to \emph{transfer} parameters from a set of preoptimized problem instances to new instances, thereby removing or reducing the cost of parameter optimization~\cite{Lotshaw2021BFGS,Galda2021transfer,wurtz2021fixedangle,brandao2018concentration}. However, these prior results address only unweighted MaxCut problems and the Sherrington--Kirkpatrick (SK) model, which are of limited practical interest. 

In this work we demonstrate parameter transfer for QAOA applied to the weighted MaxCut problem, which has many relevant practical problems as special cases~\cite{Lucas2014qubo}, including network community detection~\cite{Newman06modularity}. 
We develop intuition by first examining the impact of weights on the QAOA parameter landscapes and then propose a simple transfer scheme based on this intuition.  Our scheme, illustrated in Fig.~\ref{fig:schematic}, uses a simple rescaling technique to transfer typical (median) optimized parameters from unweighted MaxCut instances to weighted instances with varying sizes and with weights coming from varying distributions. Despite its simplicity, our approach leads to approximation ratios that are competitive with those found by  direct (computationally expensive) parameter optimization. We show how the performance can be further improved at the cost of only a few extra QAOA circuit evaluations by sampling from a metadistribution of good parameters learned using kernel density estimation. If the transferred parameters are used as the starting point for the BFGS optimization algorithm, a single BFGS run recovers approximation ratios equal to (up to high precision) that of QAOA with directly optimized parameters in $96.35\%$ of the cases considered. Our results provide empirical evidence that instance independence of optimal parameters, previously shown for unweighted MaxCut, extends to the weighted case. Our numerical results are for QAOA with 1, 2 and 3 layers, although it is likely that the same qualitative behavior will be observed for a higher number of layers. To show the limitations of parameter transfer, we explore a set of graphs with edge weights drawn from a pathological distribution and identify the mechanisms behind the relatively poor performance of the transferred parameters for these instances. While these cases challenge our direct parameter transfer approach, we find that our parameters can still serve as a good initial point for BFGS optimization even in such instances.

All data in this paper is available in the open-source software package \texttt{QAOAKit}~\cite{shaydulin2021qaoakit}, and we provide code examples of how our findings can be leveraged in practice~\cite{codeexampledirectlink}. Our findings demonstrate that a simple transfer scheme can dramatically reduce the cost of parameter optimization for QAOA in the general weighted case, bringing QAOA one step closer to being able to address problems of practical interest and demonstrate quantum computational advantage.

 \begin{figure}[t]
    \centering
    \includegraphics[width=\columnwidth,trim={0 0cm 0cm 0cm},clip]{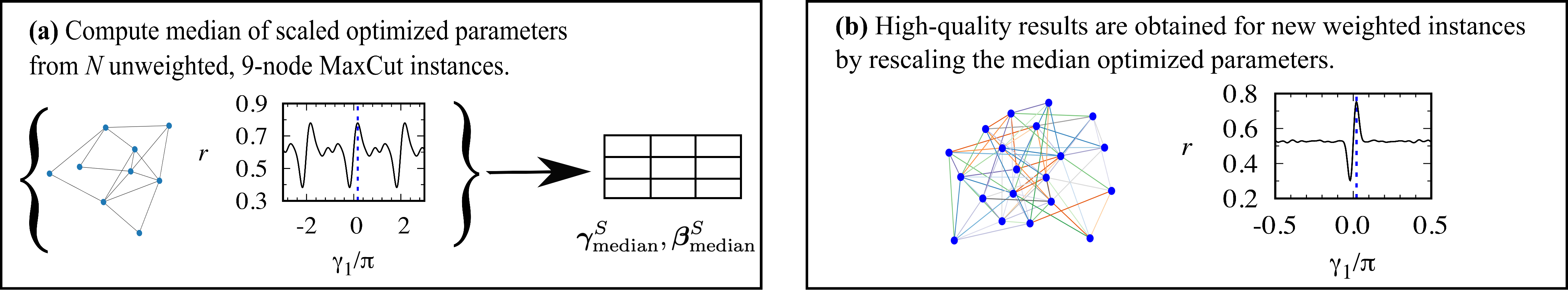}
    \caption{Example of parameter transfer for QAOA on weighted MaxCut. (a) We use a previous set of $N = 261,080$ optimized, unweighted, 9-node MaxCut instances to compute median (``typical'') scaled parameters $(\bm \beta_\mathrm{median}^S,\bm\gamma_\mathrm{median}^S)$ for each QAOA depth $p$ (Table \ref{table:median} and Eq.~\ref{eq:scale_median}). We develop an approach in Eqs.~\ref{eq:scale_weighted_beta}, \ref{eq:scale_weighted_gamma} to rescale the $\bm \beta_\mathrm{median}^S$ and $\gamma_\mathrm{median}^S$ to obtain high-quality parameters for new weighted instances such as the 20-node weighted instance in (b). The blue dashed line in (a) indicates directly optimized $\gamma_1$ and in (b) the transferred $\gamma_1$.} 
    \label{fig:schematic}
\end{figure}

\section{Methods}
\subsection{Setup}

We begin by briefly defining the relevant concepts and establishing notation. We consider optimization problems where the goal is to find a binary string $z\in \{0,1\}^n$ that maximizes some objective function $\mathcal{C}(z)$.
 The objective function 
 can be encoded as an $n$-qubit Hamiltonian $C$ that acts on computational basis states as $C \vert z \rangle = \mathcal{C}(z)\vert z \rangle.$
 For a given problem instance, we denote the maximum value of the objective function $\mathcal{C}_\mathrm{max}=\max_z \mathcal{C}(z)$ and the minimum $\mathcal{C}_\mathrm{min}=\min_z\mathcal{C}(z)$. In numerical experiments below, the values $\mathcal{C}_\mathrm{max}$ and $\mathcal{C}_\mathrm{min}$ are obtained by complete enumeration.

The quantum approximation optimization algorithm  
approximately solves such combinatorial optimization problems by preparing a parameterized quantum state
\begin{equation} \label{eq:QAOA}
\vert \bm \beta, \bm \gamma, \mathcal{C}\rangle = \prod_{l=1}^p e^{-i \beta_l B} e^{-i \gamma_l C} \vert + \rangle^{\otimes n}, 
\end{equation}
where $B = \sum_{j=1}^n\xgate_j$ 
is a sum of single-qubit Pauli $\xgate$ operators. The number $p$ is commonly referred to as the QAOA depth. The parameters $\bm \beta = (\beta_1,...,\beta_p)$ and $\bm \gamma = (\gamma_1,...,\gamma_p)$ are chosen so that the (approximate) solution to the optimization problem is retrieved upon measuring $\vert \bm \beta, \bm \gamma\rangle$. This is typically done by choosing $(\bm \beta, \bm \gamma)$ that maximize the expected objective value of the measurement outcomes:
\begin{equation} \label{eq:objective_expectation}
\langle C (\bm \beta, \bm \gamma) \rangle =  \bra{\bm \beta, \bm \gamma, \mathcal{C}}C\ket{\bm \beta, \bm \gamma, \mathcal{C}} = \sum_{z\in \{0,1\}^n}\Prob(z)\mathcal{C}(z). 
\end{equation}
Below we follow Ref.~\cite{brandao2018concentration} and refer to the value in \eqref{eq:objective_expectation} as the ``QAOA objective'' or simply ``objective'' for brevity. For a fixed function $\mathcal{C}$ and depth $p$, we can vary $\bm \beta$ and $\bm \gamma$ in \eqref{eq:objective_expectation} to produce an objective landscape~\cite{brandao2018concentration}. We quantify the performance on a given instance for a given $(\bm \beta, \bm \gamma)$ using the approximation ratio
\begin{equation} \label{eq:approx_ratio}
r = \frac{\langle C(\bm \beta, \bm \gamma) \rangle - \mathcal{C}_\mathrm{min}}{\mathcal{C}_\mathrm{max}-\mathcal{C}_\mathrm{min}} 
\end{equation}
with $0 \leq r \leq 1$. When discussing the difference between two approximation ratios we sometimes present the difference in terms of percentage points for clarity of presentation. We say that the gap between $r_1 > r_2$ is $k$ percentage points ({\pp}) if $(r_1-r_2)\times 100 = k$.

In this paper we consider the weighted MaxCut problem, which has many  interesting practical problems as special cases \cite{Newman06modularity,Lucas2014qubo}. In weighted MaxCut, given an undirected graph $G=(V,E)$, with each edge $(i,j) \in E$ having a weight $w_{i,j}=w_{j,i}$, the goal is to partition the set of vertices $V$ into two parts to maximize the sum of weights of edges that span both parts. The objective function of MaxCut is represented by the Hamiltonian
\begin{equation} 
C_{\text{MaxCut}} = \frac{1}{2}\sum_{( i,j )\in E} w_{i,j}\left(\idgate-\zgate_i\zgate_j\right), 
\end{equation} 
where in the unweighted case, $w_{i,j} = 1, \forall (i,j)\in E$. For a given graph, we denote the average absolute edge weight by
\begin{equation}
    \wbar = \frac{1}{|E|}\sum_{( i,j )\in E} \left|w_{i,j}\right|.
\end{equation}

\subsection{Objective landscapes with weights}

We now develop the intuition for how the introduction of weights affects QAOA objective landscapes as a function of the QAOA parameters $(\bm\beta,\bm\gamma)$. We begin  by considering an example of a triangle-free graph with depth $p=1$, where an analytical formula for $\langle C (\bm \beta, \bm \gamma) \rangle$ is available. We then extend to more general instances and show how it motivates our parameter transfer scheme. We evaluate how the proposed parameter transfer scheme generalizes to the dataset of 34,701 graph instances numerically in Sec.~\ref{sec:numerical_experiments}.

\subsubsection{Triangle-free graph}

We begin by considering the simple case of a triangle-free weighted graph with $p=1$, for which the QAOA objective can be expressed as
\begin{equation} \label{eq:qaoa_exp_triangle_free}
  \small
  \begin{aligned}
    \langle C(\beta_1,\gamma_1)\rangle & = \frac{W}{2} + \frac{\sin 4 \beta_1}{4} \sum_{( i,j ) \in E} w_{i,j} \sin (w_{i,j} \gamma_1) 
    \left(\prod_{l \in \nbhd{i} \backslash\{j\}} \cos ( w_{i,l}\gamma_1)+\prod_{k \in \nbhd{j} \backslash\{i\}} \cos (w_{j, k}\gamma_1 )\right), 
  \end{aligned}
\end{equation}
where $W=\sum_{( i,j )\in E}w_{i,j}$ and $\nbhd{i} \backslash\{j\}$ is the neighborhood of vertex $i$ excluding vertex $j$ \cite{hadfield2018quantum}. With the fixed optimal $\beta_1=\frac{\pi}{8}$, the objective is composed of terms that oscillate in $\gamma_1$ at frequencies set by the edge weights and may not be periodic for irrational $w_{i,j}$. Figure \ref{fig:triangle free} shows an example of the behavior for a 4-node cycle graph, with (a) the individual components $w_{i,j} \sin (w_{i,j} \gamma_1)\cos (w_{j, k}\gamma_1)\sin(4\beta_1)/4$ and (b) the approximation ratio both given as a function of $\gamma_1$.

\begin{figure}
    \centering
    \includegraphics[width=7cm, height=5cm, keepaspectratio,trim={0.5cm 2cm 0.5cm 1.6cm}, clip]{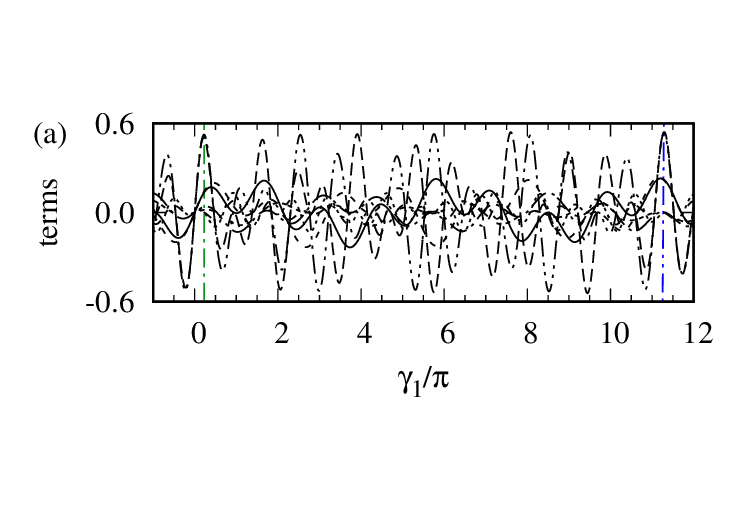}
    \includegraphics[width=7cm, height=5cm, keepaspectratio,trim={0.65cm 2cm 0.5cm 1.6cm}, clip]{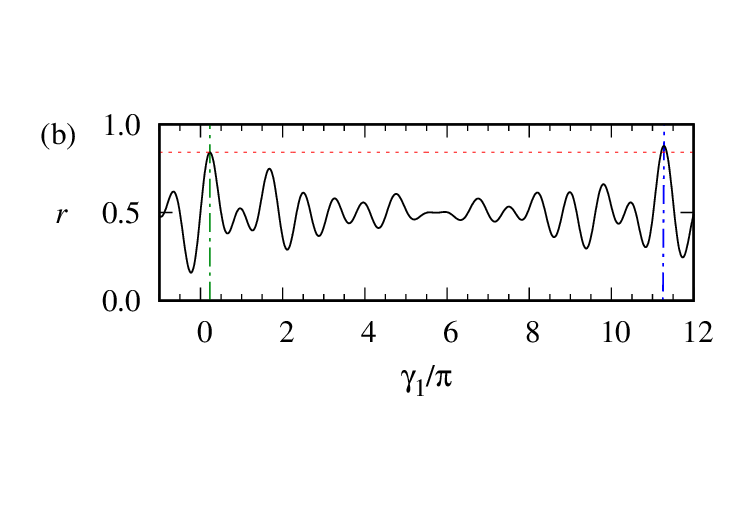}
    \caption{Objective for a 4-node cycle graph with edge weights $w_{0,1}=0.94, w_{1,2}=-0.53, w_{2,3}=-2.17, w_{3,0}=0.36$ and $\beta_1=\pi/8$. (a) The eight individual terms $w_{i,j}\sin(w_{i,j}\gamma_1)\cos(w_{i,k}\gamma_1)/4$ in Eq.~(\ref{eq:qaoa_exp_triangle_free}) and (b) the approximation ratio. Vertical dashed-dotted lines show the locations of two local optima, and the horizontal dotted line shows the optimum approximation ratio near $\gamma_1=0.$}
    \label{fig:triangle free}
\end{figure}

The green dashed-dotted-line in Fig.~\ref{fig:triangle free}(b) shows that a high approximation ratio is obtained at $\gamma_1\approx 0.23\pi$, because of an approximate alignment and maximization of the terms in Fig.~\ref{fig:triangle free}(a). The appearance of such a maximum at small values of $\gamma_1$ is a generic feature of \eqref{eq:qaoa_exp_triangle_free}. It arises from an approximate alignment in the phases of the various terms, where small positive values of $\gamma_1$ produce positive values of both $w_{i,j} \sin(w_{i,j} \gamma_1)$ and $\cos(w_{i,j} \gamma_1)$, independent of the sign or magnitude of $w_{i,j}$. By contrast, larger choices of $\gamma_1$ typically give much smaller $r$, since the phases can have varying or canceling oscillation frequencies that depend on the set of edge weights. Overall, this phase alignment for small values of $\gamma_1$ results in small values of $\gamma_1$ typically being a good parameter choice.
 
Of course, small positive values of $\gamma_1$ do not necessarily maximize \eqref{eq:qaoa_exp_triangle_free} globally. We have constructed such an example in Fig.~\ref{fig:triangle free}(b), where the blue dashed-double-dotted line at $\gamma_1 \approx 11.25\pi$ gives a superior approximation ratio to the best optimum for small values of $\gamma_1$ at $\gamma_1 \approx 0.23\pi$. The peak at $\gamma\approx11.25\pi$ is obtained by simultaneously approximately maximizing terms with the two prefactors $w_{i,j}$ of largest magnitude in Fig.~\ref{fig:triangle free}(a), which gives a higher objective value than does the approximate alignment of terms at a small value of $\gamma_1$.  Similar maxima at large $\gamma_1$ can be devised for other triangle-free graphs with $p=1$ using \eqref{eq:qaoa_exp_triangle_free}. In general, however, finding such maxima requires optimizing over an unbounded domain. Therefore, in numerical experiments we restrict the starting points for the local optimizations to $-\pi/\wbar \leq \gamma_l \leq \pi/\wbar$.

\subsubsection{Generic graph}

We now extend the motivation from the triangle-free case to generic graphs. In the triangle-free $p=1$ case, we found a maximum in $r$ at small values of $\gamma_1$ that we attributed to the approximate phase alignment in the component terms of the QAOA objective in \eqref{eq:qaoa_exp_triangle_free}. In more general cases, the objective is instead given by \eqref{eq:objective_expectation}.  This equation depends on the operators in \eqref{eq:QAOA}, which at the $l$th step can be expressed in terms of trigonometric functions of the parameters and weights:
\begin{equation}
    e^{-i \beta_l B} = \prod_{j=1}^n \big[\cos{\beta_l} - i\xgate_j\sin{\beta_l} \big],
    \hspace{0.1in}
    \mbox{ and}
    \hspace{0.1in}
     e^{-i \gamma_l C}  = \prod_{( i,j)\in E} \big[\cos{\frac{\gamma_l w_{i,j}}{2}} + i\zgate_i\zgate_j\sin{\frac{\gamma_l w_{i,j}}{2}} \big].
\end{equation}
As in \eqref{eq:qaoa_exp_triangle_free}, the objective value is again determined by terms that oscillate in $w_{i,j}\gamma_l$. Intuitively, we can expect the phases in these terms to approximately align for small $\gamma_l$, while larger $\gamma_l$ may effectively randomize the phases and the overall behavior. Hence, we expect a high-quality maximum from the alignment of terms near $\gamma_l=0$. We observe this maximum numerically for all $l=1,\ldots,p$.

Figure~\ref{fig:contours small} demonstrates how these effects impact the QAOA objective landscapes for a single, randomly selected, 14-node (not triangle-free) graph with edge weights drawn from different distributions. We plot the approximation ratio $r$ for $p=1$ QAOA as a function of $(\beta_1,\gamma_1)$. Figure~\ref{fig:contours small}(a) considers an unweighted graph, which has a simple periodic structure, while (b) shows the same graph but with weights drawn from an exponential distribution with a mean weight of 5.  Here we see the appearance of a sharp local maximum at small values of $\gamma_1$ while larger $\gamma_1$ give suboptimal $r$ with small fluctuations. Thus the high-quality maximum at small $\gamma_1$ from approximate phase alignment is also manifest in this more general example.

The maximum in Fig.~\ref{fig:contours small}(b) is surrounded by a small basin of attraction, making it difficult to find optimal parameters when starting from random initial parameters. Furthermore, the optimized $\gamma_1$ in (b) is much smaller than in (a); hence, naively using optimized parameters from (a) in (b) will result in a poor approximation ratio.  To allow the transfer from unweighted to weighted graphs, we must take into account the magnitude of edge weights. Specifically, we observe that an optimal %
value of $\bm \gamma$ changes with the mean absolute value of the edge weights as $\bm\gamma \approx O(1 / \wbar)$, irrespective of the depth $p$. 

To see this, first consider QAOA applied to two general (not necessarily MaxCut) objective functions $\mathcal{C}_w$ and $\mathcal{C}$ that are equivalent up to some factor $w > 0$, in other words, $\mathcal{C}_w = w\mathcal{C}$. If QAOA achieves an approximation ratio of $r$ for $\mathcal{C}$ with parameters $(\bm \beta, \bm \gamma)$, then QAOA will achieve an identical approximation ratio for $\mathcal{C}_w$ with $(\bm \beta, \bm \gamma/w)$. We give a rigorous proof of this statement in the Appendix \ref{methods}. Note that this statement holds for arbitrary depth $p$. 

This observation informs our intuition about typical problems and cases. If the objective function is on average increased by a factor $w > 0$, then $\bm \gamma^*$ should decrease by the same factor. Equivalently, the objective function can be rescaled and $\bm\gamma^*$ fixed. In the case of MaxCut, the objective function scales linearly with graph edge weights; therefore, the relevant metric is the average absolute edge weight $\wbar$.

\begin{figure}
    \centering
    \includegraphics[width=\columnwidth,trim={0cm 0cm 0cm 0cm},clip]{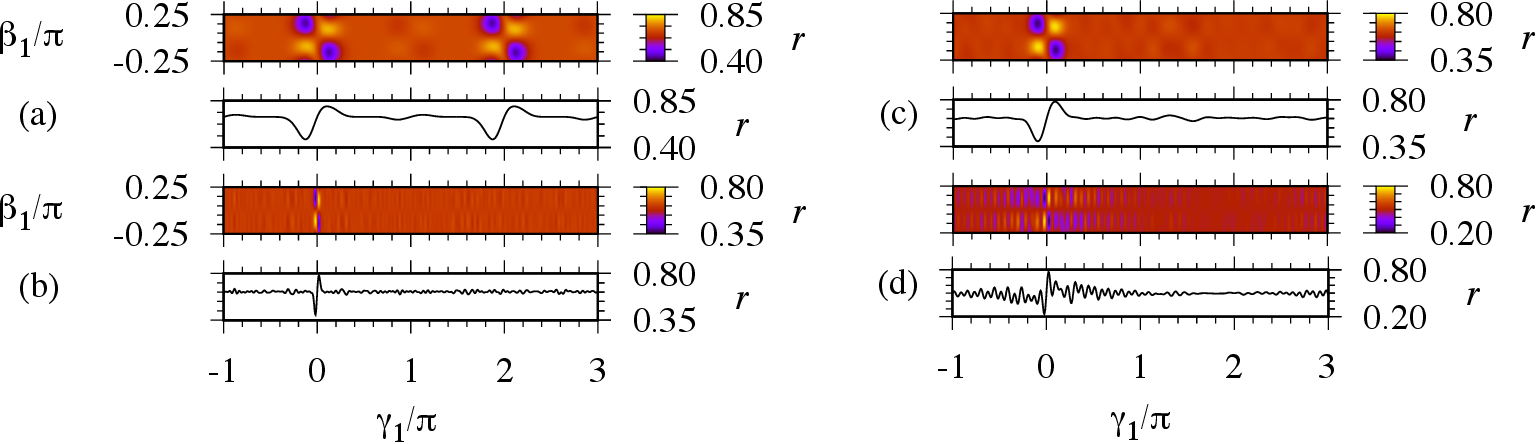}
    \caption{Approximation ratio landscapes $\langle C(\beta_1,\gamma_1)\rangle$ at $p=1$ and scans $\langle C(\beta_1^*,\gamma_1)\rangle$ at optimized $\beta_1^*$ for (a) unweighted MaxCut on a 14-vertex graph, (b) weighted MaxCut for the same graph with edge weights drawn from an exponential distribution with mean absolute weight $\wbar=5$, (c) exponential (same as (b)) rescaled to have $\wbar=1$,  and (d) Cauchy distributions. The $\beta_1$ ranges are the same in all panels.} 
    \label{fig:contours small}
\end{figure}

In Fig.~\ref{fig:contours small}(c) we demonstrate how weight magnitudes affect the objective landscape. The figure 
shows the same weight instance as (b) but with a uniform rescaling in $\mathcal{C}$ such that $\wbar=1$ (the same plot could also be generated by fixing $\mathcal{C}$ and rescaling $\gamma_1$). This ``stretches'' the landscape, alleviating the issue of a narrow basin of attraction in (b).  The objective has a similar structure to the unweighted case near $\gamma_1=0$, which has the same average edge weight. The optimized parameters from the unweighted case can now be transferred to the weighted case. We find similar behavior to (b),(c) both for weights drawn from a uniform distribution of positive values and for weights drawn from a uniform distribution of positive and negative values. The optimized $\gamma_1$ from the unweighted case can be successfully transferred to each of these varying weighted cases. The scaling factor $\wbar$ is essential in achieving the correspondence between landscapes that enables parameter transfer. 

The final plot (d) shows an instance with weights drawn from a Cauchy distribution. The Cauchy distribution on $[-\infty,\infty]$ is well known as an exemplar of pathological behavior, with symmetry about the origin and normalization but with an undefined mean and higher moments. In this work we use a truncated Cauchy distribution, with weights $-10^3 \leq w_{i,j} \leq 10^3$ before rescaling to $\wbar=1$, which avoids the pathologies of the moments but achieves a much higher variance for the edge weights than other distributions considered so far. The landscape is visibly ``rugged'' with many local maxima, although we still observe a prominent maximum near $\gamma_1 = 0$.  Below, we show evidence that even for these instances we can often determine high-quality parameters by transferring the optimized parameters from the unweighted case and using them as a starting point for a local optimizer. 

To summarize, we identified two important aspects of the objective landscapes for weighted graphs.  The first is the appearance of a prominent local maximum at small $\gamma_1$ values, related to phase alignment in components of the objective.  The second is the overall magnitude of the objective function, which can be standardized between instances by rescaling all edge weights by a factor $\wbar$. When the variance of the edge weights is not too large, this rescaling aligns the objective maximum of weighted instances at small $\bm\gamma$  with the corresponding maximum of unweighted instances, allowing parameter transfer.  The examples we give are for $p=1$. We expect similar arguments and intuition to apply to $p>1$, and we present numerical evidence for $p\in\{1,2,3\}$ in Sec.~\ref{sec:numerical_experiments}. 

\subsection{Scaling rule for parameter transfer}
 
We now present our transfer procedure. An overview is shown in Fig.~\ref{fig:schematic}. As the intuition developed above suggests, to transfer QAOA parameters from unweighted to weighted MaxCut instances, we must rescale the parameters to account for the change in the objective function values. We account for this change by incorporating the average edge weight (following the intuition outlined above) and the average degree of the graph. The parameter scaling with the average degree improves performance when transferring parameters between different graphs; it has been observed previously for the unweighted MaxCut and the SK model~\cite{Ozaeta2021IsingQAOA,Farhi2020SK,Wang2018} and proven rigorously in the infinite-size limit~\cite{2110.10685,2110.14206}.

Consider a dataset of preoptimized parameters $\{\bm\beta^{*}_j, \bm\gamma^{*}_j\}_{j=1}^N$ for $N$ unweighted graphs. We begin by rescaling the parameters in the dataset using the average degree $d_j = \frac{\sum_{k\in V}d_j^k}{|V|}$ of the graph: 
\begin{equation}
    (\bm \beta^{S}_j, \bm\gamma^{S}_j) = \left(\bm\beta^{*}_j, \frac{\bm\gamma^{*}_j}{\arctan(\frac{1}{\sqrt{d_j-1}})}\right) \mbox{for } j = 1, \ldots, N.
    \label{eq:scale_median}
\end{equation}
We scale $\bm\gamma$ by $\arctan(\frac{1}{\sqrt{d-1}})$ following the exact formula for optimal triangle-free and $p=1$ $\bm\gamma$ from \cite{Wang2018} because we observe it to empirically give a better performance for the low values of $p$ considered in this work. The use of the $\arctan$ scaling here is a heuristic choice, as we consider non-triangle-free and $p>1$ instances for transfer, which do not conform to the analytic assumptions of \cite{Wang2018}. Nonetheless we find this same scaling to be useful in achieving high-quality results for our cases. For higher values of $p$, scaling by $\sqrt{d}$ may be preferable~\cite{Ozaeta2021IsingQAOA,2110.10685,2110.14206}, although the two are asymptotically equivalent for large $d$. Next, the median scaled parameters $(\bm \beta^S_\mathrm{median}, \bm \gamma^S_\mathrm{median})$ are computed from $(\bm\beta^{S}_j, \bm\gamma^{S}_j)_{j=1}^N$. This completes steps pictured in Fig.~\ref{fig:schematic}(a).

Finally, to determine transferred angles for a new instance, the pre-computed median parameters are rescaled as
\begin{align}
 \bm\beta_{w} & = \bm \beta_{\text{median}}^S,
 \label{eq:scale_weighted_beta}\\
 \bm\gamma_{w} & =  \bm\gamma_{\text{median}}^S\frac{\arctan(\frac{1}{\sqrt{d_w-1}})}{\wbar}  \label{eq:scale_weighted_gamma}. 
\end{align}
Here $d_w$ is the average degree and $\wbar$ is the average absolute edge weight of the weighted graph to which the parameters are transferred. Equations~(\ref{eq:scale_weighted_beta})--(\ref{eq:scale_weighted_gamma}) determine the transferred parameters as in Fig.~\ref{fig:schematic}(b).

In this work we use the dataset containing preoptimized parameters for all $N=261,080$ nonisomorphic connected 9-node  graphs~\cite{Lotshaw2021BFGS,shaydulin2021qaoakit}. The median scaled parameters from this dataset $\bm\beta_{\text{median}}^S$ and $\bm\gamma_{\text{median}}^S$ are given in Table~\ref{table:median}. With the pre-computed median parameters it is straightforward to generate transferred angles for new weighted instances using (\ref{eq:scale_weighted_beta})-(\ref{eq:scale_weighted_gamma}). We evaluate the success of this simple procedure for a variety of new instances in the next section.

\begin{table}[]
\renewcommand{\arraystretch}{1.5}
\begin{tabular}{|c|cccc|}
\hline
  p & &1 &2 &3 \\ \hline
\multirow{2}{*}{\;$1$\;}  & $\bm \beta^S_\mathrm{median}$ \hspace{-0.2in}  & \; -0.101708$\pi$\; & & \\ 
                          &$\bm \gamma^S_\mathrm{median}$ \hspace{-0.2in} & \; -0.287231$\pi$\; & & \\ \hline
\multirow{2}{*}{\;$2$\;} & $\bm \beta^S_\mathrm{median}$ \hspace{-0.2in}  & \; -0.139136$\pi$\; & \; -0.083772$\pi$\; &  \\  
                       &$\bm \gamma^S_\mathrm{median}$ \hspace{-0.2in}  & \; -0.230102$\pi$\; & \; -0.453701$\pi$\; & \\ \hline
\multirow{2}{*}{\;$3$\;} & $\bm \beta^S_\mathrm{median}$ \hspace{-0.2in}   & \; -0.149780$\pi$\; & \; -0.107380$\pi$\; & \; -0.063381$\pi$\;  \\    
                        &$\bm \gamma^S_\mathrm{median}$ \hspace{-0.2in} & \; -0.199343$\pi$\; & \; -0.389866$\pi$\; & \; -0.466856$\pi$\; \\ \hline
\end{tabular}
        \caption{Median $\bm \beta^S_\mathrm{median}, \bm \gamma^S_\mathrm{median}$ QAOA parameters computed over optimized parameters for all nonisomorphic 9-node graphs \cite{shaydulin2021qaoakit,Lotshaw2021BFGS}, with $\bm\gamma$ scaled following Eq.~\ref{eq:scale_median}. 
        }
        \label{table:median}
\end{table}

\begin{figure*}[t]
    \centering
    \includegraphics[width=0.9\textwidth]{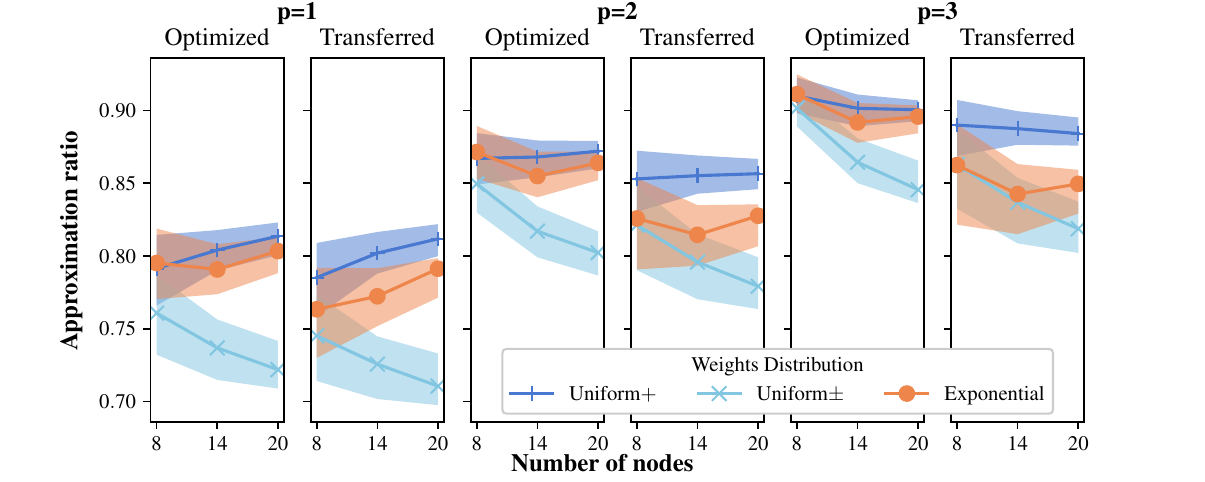}
    \caption{Approximation ratios using directly optimized parameters and transferred parameters, for $p=1,2,3$ and varying weight distributions. Markers indicate median values, shaded regions are interquartile range, and lines guide the eye. The median (over the full dataset) decrease in the approximation ratio is only 2.0 {\pp}~with the transferred parameters. Using KDE reduces this gap to 1.2 {\pp}, and BFGS reduces it further, as described in the text.}
    \label{fig:approximation_ratios_transfer}
\end{figure*}

\section{Results of Numerical experiments}\label{sec:numerical_experiments}

We evaluate the proposed parameter transfer scheme on a dataset of $34,701$ weighted graphs with up to 20 nodes. The dataset includes all nonisomorphic graphs on 8 nodes, 300 Erd\H{o}s-R\'enyi random graphs on 14 nodes, and 50 Erd\H{o}s-R\'enyi with 20 nodes, where these random graphs have an edge-creation probability of $0.5$. The edge weights are drawn from three distributions: uniform on $(0,b)$, uniform on $(-b,b)$, and exponential. 
Without loss of generality, we set $b=1$ to generate weights and  then rescale the edge weights in each instance so that $\wbar=1$.  To provide a baseline against which to compare our transfer proposal, we obtain optimized parameters and QAOA approximation ratios by exhaustive optimization for each weighted graph. The precise definition of the weight distributions and an overview of the parameter optimization are given in Appendix \ref{methods}. We release the full dataset in \texttt{QAOAKit}~\cite{datadirectlink1,datadirectlink2}.

The median parameters to be transferred are provided in Table~\ref{table:median}.
This one fixed set of median parameters is then transferred to each weighted graph following Eqs.~\ref{eq:scale_weighted_beta}--\ref{eq:scale_weighted_gamma}. Additionally, we have experimented with using rescaled fixed-angle conjecture parameters from \cite{wurtz2021fixedangle}. Because the median parameters provided better performance, we do not include the results using these fixed-angle conjecture parameters.

The approximation ratios obtained by using transferred and directly numerically optimized parameters are presented in Fig.~\ref{fig:approximation_ratios_transfer}.  We observe a median (over all $p$ and weight distributions) difference in approximation ratios between QAOA with directly optimized and transferred median parameters of only 2 percentage points ({\pp}).
As expected, the difference grows with depth $p$ (1.2 {\pp}~for $p=1$, 2.4 {\pp}~for $p=2$, and 3.0 {\pp}~for $p=3$). We also observe that the effectiveness of the parameter transfer depends significantly on the distribution of weights: uniform on $(0,1)$ is the easiest for parameter transfer (1.0 {\pp}), uniform on $(-1,1)$ being harder (2.1 {\pp}), and exponentially drawn weights leading to the hardest instances (3.6 {\pp}). We observe that transfer works well (the gap is small) if the variance of the weights is low. As the variance increases, the performance of the transfer procedure deteriorates (i.e., the gap increases). 
The relationship between the performance of QAOA with transferred parameters and the variance of the weights is shown in Fig.~\ref{fig:gap_vs_std_weight} for $p=1$ and $|V|=14$ nodes, with a similar behavior observed for other values of $p$ and $|V|$. We note that even in the worst case of Fig.~\ref{fig:approximation_ratios_transfer} ($p=3$, exponentially drawn weights), the decrease in approximation ratio from using median parameters is only 4.5 {\pp}  This demonstrates that the concentration of optimized parameters, previously observed for unweighted MaxCut and the SK model, is present in more general weighted MaxCut instances if they are rescaled appropriately. 
Note that all the statistics at each $p$ are obtained by using a \textit{single} value of median parameters, irrespective of the graph size and the edge weights distribution.

\begin{figure}[h]
    \centering
    \includegraphics[width=0.48\linewidth,trim=0 0 0 0in]{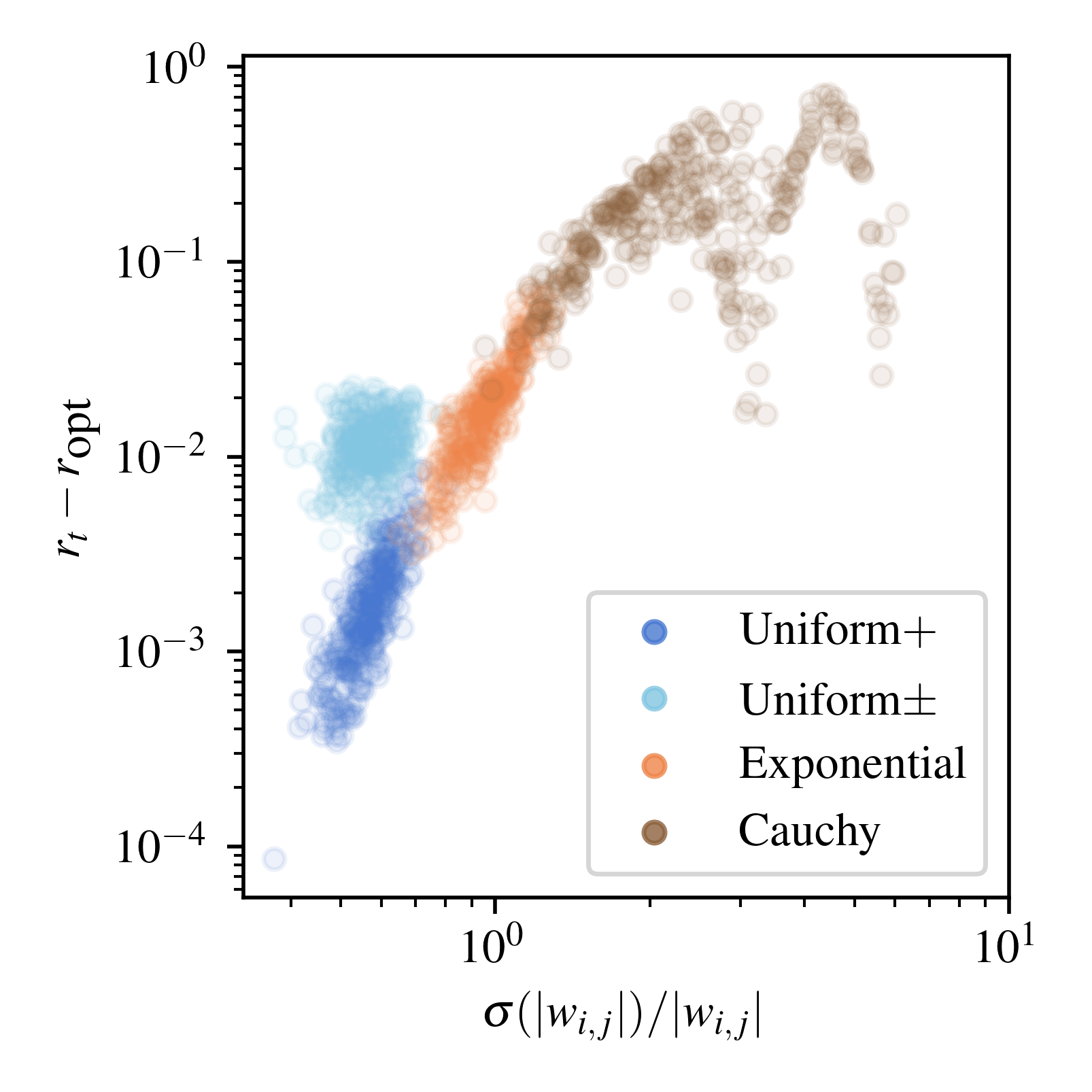}
    \caption{Gap between the QAOA approximation ratio with transferred ($r_t$) and directly optimized ($r_{\text{opt}}$) parameters and the standard deviation of the weights for $p=1$ and 14-node graphs. If the standard deviation is small, parameter transfer works well, and the gap $r_t - r_{\text{opt}}$ is small.
    }
    \label{fig:gap_vs_std_weight}
\end{figure}

We observe that the transferred parameters are sufficiently close to the optimized ones that a single local optimization run nearly always recovers an approximation ratio equivalent to the ratio achieved by the directly optimized parameters. Specifically, for each graph we perform one BFGS optimization run using the transferred median parameters as the starting point, and we observe that the optimized approximation ratio is recovered up to high precision ($10^{-7}$) for $96.35\%$ of instances considered in Fig.~\ref{fig:approximation_ratios_transfer}. Moreover, in $96.34\%$ of the cases, we were able to use parameter symmetries to show that the parameters recovered by the BFGS are the same as the directly  optimized ones up to $10^{-4}$. 
This result indicates that the transferred median parameters are in the same attraction basin as the optimized ones, suggesting that using them as the initial guess can greatly simplify parameter optimization even in the cases where very high-quality QAOA parameters are desired.

The performance of parameter transfer can be understood further by examining how far the transferred parameters are from directly optimized ones. To compute the distances, we first map the optimized parameters to symmetry-related parameters that minimize $||\bm \beta_\mathrm{opt} - \bm \beta_\mathrm{transf}||$ as described in Appendix \ref{methods}. This gives a unique distance for each instance. The Euclidean distances between directly optimized parameters and transferred medians are given in Fig.~\ref{fig:eucledian_distance_transf_and_cauchy} (left).  The distances increase with $p$, partly because of the increase in the dimension of the parameter space in which the distance is computed. As expected from \eqref{eq:scale_weighted_beta}, we consistently observe a small distance between directly optimized and transferred $\bm\beta$. For the parameter $\bm\gamma$, we observe that the distance is larger for distributions on which the performance of the transferred parameters is worse. While the difference varies based on the distributions, the fact that the absolute value of the difference remains small ($<0.1$ median) for all the distributions in Fig.~\ref{fig:approximation_ratios_transfer} explains the success of BFGS in retrieving the optimized parameters.

Intuitively, it is unlikely that one set of median parameters could be transferred to \emph{any} instance by a simple rescaling. To test the limits of parameter transfer, we consider weights sampled from a truncated Cauchy distribution. This generates edge weights with varying orders of magnitude, leading to a significantly higher standard deviation of weights than with the other distributions and also greater variations between randomly drawn weight instances, as seen in Fig.~\ref{fig:gap_vs_std_weight}. The performance of parameter transfer on problems with weights drawn from a Cauchy distribution is given in Fig.~\ref{fig:eucledian_distance_transf_and_cauchy} (right).  Here direct parameter transfer from our scheme is less effective, as anticipated in connection with the objective landscape of Fig.~\ref{fig:contours small}(d). Often, however, the transferred parameters still serve as a good initial guess for local optimization, where they return final approximation ratios equivalent to those from direct optimization in $37.44\%$ of the cases. Hence, the transfer is useful for minimizing the cost of parameter optimization even in this extreme setting. 

We observe much larger distances between directly optimized and transferred $\bm\beta$ for the Cauchy weight distribution. However,  of all large ($>0.1$) distances in $\bm\beta$, $99.9\%$ correspond to the cases where the optimized $\bm\gamma$ has both negative and positive components. This is in contrast to the other cases where typically all parameters have the same sign.

\begin{figure}[t]
    \centering
    \includegraphics[width=0.35\linewidth,trim=0 0 0 0in]{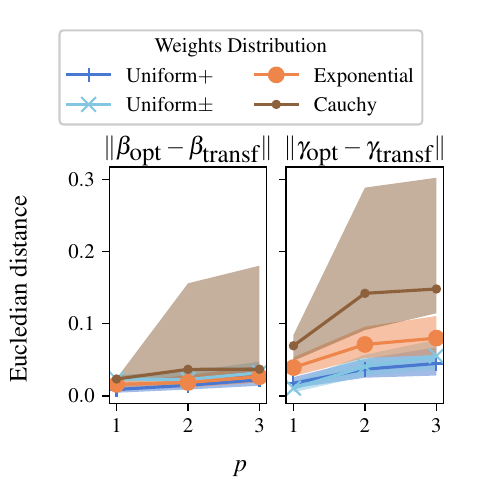}
    \hfil
     \includegraphics[width=0.35\linewidth]{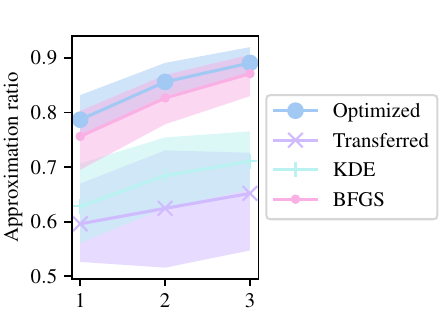}
    \caption{Left: Euclidean distance between transferred and directly optimized parameters, with small distances indicative of parameter concentration. Markers indicate medians, shaded regions are interquartile ranges, and lines guide the eye. Some of the growth with $p$ is due to the growing dimensionality of the parameter space. Right:
    Approximation ratios for the truncated Cauchy weight distribution in each of our approaches. Markers indicate medians,  shaded regions are interquartile ranges, and lines guide the eye. Local optimization with BFGS, starting from the transfer parameters, typically recovers approximation ratios equivalent to full optimization.
    \label{fig:eucledian_distance_transf_and_cauchy}}
\end{figure}

To further improve the performance of parameter transfer, we follow Ref.~\cite{khairy2019learning} and train a generative model that learns a metadistribution of optimized QAOA parameters using kernel density estimation (KDE). We train the model using the same dataset of scaled preoptimized parameters for 9-node graphs given by \eqref{eq:scale_median}, with details in Appendix \ref{methods}.  For each graph, we sample 10 parameters $(\bm\beta, \bm\gamma)$ from the model and use them in addition to the median parameters, for a total of 11 parameters evaluated, and report the best approximation ratio.  At the modest cost of only 10 extra samples we can significantly reduce the gap between the performance with transferred and directly optimized parameters. The median (over the full dataset) gap is reduced by 40\%, from 2.0 {\pp}~to 1.2 {\pp}, with KDE reducing the gap over using only the median parameters in 89.5\% of the cases. The trends observed for median parameters persist when using the KDE-based method: the performance decreases with depth (0.7 {\pp}~for $p=1$, 1.2 {\pp}~for $p=2$ and 1.8 {\pp}~for $p=3$) and with harder distribution (0.6 {\pp}~for uniform on $(0,1)$, 1.0 {\pp}~for uniform on $(-1,1)$, and 2.7 {\pp}~for exponential). 
This corresponds to a median reduction of the gap by 25--50\%, with the worst-case median ($p=3$, exponential distribution) reduced to 3.5 {\pp} 
By construction, the KDE-based technique always gives an approximation ratio better than or equal to using only the median.

\section{Discussion}

Our results demonstrate that
parameter concentration previously established for unweighted MaxCut~\cite{Lotshaw2021BFGS,Galda2021transfer,wurtz2021fixedangle,brandao2018concentration} and the SK model \cite{Farhi2020SK} also applies to instances of weighted MaxCut, provided the problem is appropriately rescaled. At each $p$, the parameters from a single QAOA circuit (obtained by taking the median over a large dataset) can be rescaled and transferred across tens of thousands of instances of weighted MaxCut with a performance that is near to that of QAOA with directly optimized parameters. This significantly reduces the time-consuming and difficult step of finding high-quality parameters in QAOA for weighted MaxCut.

The evidence we present considers depth $p \leq 3$.  Larger $p$ may be necessary for future applications. However, several recent results show that for unweighted MaxCut, similar parameters work well for most instances even at larger depths~\cite{wurtz2021fixedangle,zhou2020quantum}. Based on the successes for the $p$ tested and the small changes we observe as $p$ increases, it is likely that this may be true for weighted MaxCut provided the parameters are rescaled to account for weights. Similarly, previous results for unweighted MaxCut \cite{brandao2018concentration} and the SK model~\cite{2110.10685} have shown that parameter concentration persists as $|V|$ increases, consistent with our observations for weighted MaxCut across the $|V|$ we tested. This suggests that parameter concentration in weighted MaxCut may extend to larger $|V|$ and $p$ that are needed for quantum computational advantage in scientifically relevant problems. 

The extreme example of Cauchy distributed weights highlights the limitations of our parameter transfer approach specifically and ``typical case'' results in general. 
Parameter transfer in a typical case, over a large ensemble of instances, does not necessarily indicate success in  interesting practical cases. 
It remains to be seen whether problems that are relevant to science and industry exhibit properties amenable to parameter transfer and, if not, whether approaches can be developed to overcome these limitations.

\section*{Data availability}

The dataset generated during the current study is available in \texttt{QAOAKit}. The direct URLs to the datafiles are given in Refs.~\cite{datadirectlink1,datadirectlink2}.

\section*{Code availability}

The code that reproduces the figures presented in this manuscript from the published data (direct URL~\cite{codefigsdirectlink}) and the code that performs parameter transfer (direct URL~\cite{codeexampledirectlink}) are available in \texttt{QAOAKit}.

\section*{Acknowledgments}
P.C.L., J.O., and T.S.H.~were supported by the Defense Advanced Research Project Agency ONISQ program under award W911NF-20-2-0051. J.O.~acknowledges the Air Force Office of Scientific Research award, AF-FA9550-19-1-0147 and the National Science Foundation award OMA-1937008. R.S.~and J.L.~were supported by the U.S.\ Department of Energy (DOE), Office of Science, Office of Advanced Scientific Computing Research AIDE-QC and FAR-QC projects. R.S.~was supported by the Argonne LDRD program under contract number DE-AC02-06CH11357.  We gratefully acknowledge the computing resources provided on Bebop, a high-performance computing cluster operated by the Laboratory Computing Resource Center at Argonne National Laboratory.

\section*{Author Contributions}

R.S.~and P.L.~conceived of the project idea. P.L., R.S., and J.L.~developed the simulation code and performed the numerical experiments. R.S.~analyzed the data. R.S., P.L., and J.L.~wrote the manuscript. All authors contributed to technical discussions.

\printbibliography
\appendix
\section{Details of the instances, parameter optimization, and analysis}\label{methods}

In our numerical results we consider weights drawn from differing distributions to probe how QAOA performance depends on the variance of the weights and the presence of negative weights. We use uniform distributions of positive weights, uniform distributions of positive and negative weights, an exponential distribution of positive weights, and a truncated Cauchy distribution with negative and positive weights:
\begin{align} 
p_{\mathrm{uniform}+}(w_{i,j}) \propto 1,\\
p_{\mathrm{uniform}\pm}(w_{i,j}) \propto 1,\\
p_\mathrm{exp}(w_{i,j}) \propto e^{-\alpha w_{i,j}}\\ p_\mathrm{trunc.\ Cauchy}(w_{i,j}) \propto \frac{1}{1+w_{i,j}^2}.
\end{align}
We use rejection sampling to generate samples from the exponential and truncated Cauchy. For the exponential distribution we use a cutoff $w^\mathrm{exp}_{i,j} \leq 16/\alpha$, and without loss of generality we set $\alpha=1$.  In contrast, an exponential distribution on $[0,\infty]$ would have a probability $\int_{16/\alpha}^\infty e^{-\alpha w} dw = e^{-16} \approx 10^{-7}$ to sample a $w_{j,k}^\mathrm{exp} > 16/\alpha$. For the truncated Cauchy distribution we generate samples with $-10^3 \leq w_{i,j} \leq 10^3$.  This gives samples with high variance but is distinct from the full Cauchy distribution on $[-\infty,\infty]$.

For our QAOA parameter optimization runs we use a standard approach based on generating random initial $\bm \beta$ and $\bm \gamma$ and then optimizing $\langle C(\bm \beta, \bm \gamma)\rangle$ with the 
BFGS algorithm to find a local optimum. We expect our results to not be specific to BFGS and to apply to other choices of local optimizers capable of obtaining high-quality parameters $\bm \beta, \bm \gamma$. 

For $|V|=8$ we used the BFGS implementation of Ref.~\cite{NumericalRecipesBFGS}. We compared this with the l-BFGS (low storage BFGS) implementation of NLopt \cite{NLopt,Nocedal1980lBFGS, LiuNocedal1989lBFGS} in test cases   and found that each approach gave satisfactory final results, whereas the approach of Ref.~\cite{NumericalRecipesBFGS} gave final parameters that would more frequently vary significantly from the initial parameters. For $n=14$ and $n=20$, we used the NLopt implementation, since we found this took less compute time for these instances. The initial parameters were sampled uniformly at random from $-\pi/4 \leq \beta_l \leq \pi/4$, which covers the full range of the $\beta_l$ up to symmetries \cite{Lotshaw2021BFGS,zhou2020quantum}, and with $-\pi/\wbar \leq \gamma_l \leq \pi/\wbar$.  This focuses on the region around the maximum at small $\gamma_l$ for all $l$ because larger ranges tended to produce inferior results. Note that since the small values of $\gamma_l$ do not always maximize $\langle C(\bm \gamma, \bm \beta)\rangle$, the optimized parameters are not guaranteed to be globally optimal. For all distributions except Cauchy, for each graph we optimized from 50, 200, and 1,500 random initial parameters at $p=1,2,3$, respectively; for the truncated Cauchy distribution we used 200, 500, and 3,000 seeds, respectively.  The final optimized results for each graph were set to be the parameter set with the highest approximation ratio. 

To learn the metadistribution of optimized QAOA parameters, we use kernel density estimation. We train the model using optimized parameters  $\{\vect{x}_j = (\bm\beta^{s}_j, \bm\gamma^{s}_j)\}_{j=1}^N$ for unweighted MaxCut on all $261,080$ nonisomorphic 9-node graphs~\cite{Lotshaw2021BFGS,shaydulin2021qaoakit}, where the parameters have been scaled according to \eqref{eq:scale_median}. We use the Parzen--Rosenblatt smooth density estimate with a Gaussian kernel, with probability density function given by 
\begin{equation} \label{eq:kde_density}
\hat{f}_{\bm{X}}(\vect{x}) = \frac{1}{N} \sum_{i=0}^{N-1} K_\omega (\vect{x},\vect{x}^i_*),
\end{equation}
where $K_\omega (\vect{x},\vect{x}^i_*) = \frac{1}{(2 \pi \omega^2)^{N/2}} \text{exp}(\frac{-(\vect{x}-\vect{x}^i_*)^T(\vect{x}-\vect{x}^i_*)}{2\omega^2})$. The kernel bandwidth $\omega$ is optimized by using a grid search to maximize the log-likelihood of the out-of-sample data using 5-fold cross-validation. To sample from the density function \eqref{eq:kde_density}, we sample a uniformly random point $\vect{x}_j$ and add to it a sample drawn from a multivariate Gaussian distribution with zero mean and diagonal covariance matrix $\omega^2 I$ with the appropriate identity matrix $I$. We use the scikit-learn~\cite{scikit-learn} implementation of KDE. We release the code and the resulting trained KDE model in \texttt{QAOAKit}~\cite{codeexampledirectlink}.

\subsection*{Proof $\bm\gamma$ scales with objective value}

\begin{theorem}\label{thm:gamma_scaling}
Consider QAOA applied to two objective functions $\mathcal{C}_w$ and $\mathcal{C}$, $\mathcal{C}_w = w\mathcal{C}$ with $w > 0$. Let $r$ be the QAOA approximation ratio for $\mathcal{C}$ with parameters $(\bm\beta,\bm\gamma)$. %
Then QAOA with parameters $(\bm\beta,\bm\gamma/w)$ achieves an identical approximation ratio for $\mathcal{C}_w$, i.e. $r_w=r$. 
\end{theorem}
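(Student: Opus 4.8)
The plan is to show that the QAOA states in the two settings are \emph{literally identical}, after which the claim about approximation ratios is immediate from positivity of $w$. First I would write the QAOA state for $\mathcal{C}_w$ with parameters $(\bm\beta,\bm\gamma/w)$ using \eqref{eq:QAOA}. The objective Hamiltonian enters only through the phase-separation factors $e^{-i\gamma_l C}$, and since $C_w=wC$ as operators, the $l$th such factor for $\mathcal{C}_w$ with the rescaled parameter is $e^{-i(\gamma_l/w)C_w}=e^{-i(\gamma_l/w)wC}=e^{-i\gamma_l C}$, exactly the factor appearing in $\ket{\bm\beta,\bm\gamma,\mathcal{C}}$. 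The mixing factors $e^{-i\beta_l B}$ and the initial state $\ket{+}^{\otimes n}$ are unchanged, so the two products agree factor by factor and $\ket{\bm\beta,\bm\gamma/w,\mathcal{C}_w}=\ket{\bm\beta,\bm\gamma,\mathcal{C}}$. In particular the measurement distribution $\Prob(z)$ in \eqref{eq:objective_expectation} is the same in both cases.

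Next I would track the three quantities entering the approximation ratio \eqref{eq:approx_ratio}. Because $\mathcal{C}_w=w\mathcal{C}$ with $w>0$, the extrema satisfy $(\mathcal{C}_w)_{\max}=w\,\mathcal{C}_{\max}$ and $(\mathcal{C}_w)_{\min}=w\,\mathcal{C}_{\min}$. Using \eqref{eq:objective_expectation} and the shared $\Prob(z)$, the expected value transforms as $\langle C_w(\bm\beta,\bm\gamma/w)\rangle=\sum_z\Prob(z)\,\mathcal{C}_w(z)=w\sum_z\Prob(z)\,\mathcal{C}(z)=w\,\langle C(\bm\beta,\bm\gamma)\rangle$. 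Substituting into \eqref{eq:approx_ratio} gives $r_w=\big(w\langle C(\bm\beta,\bm\gamma)\rangle-w\,\mathcal{C}_{\min}\big)/\big(w\,\mathcal{C}_{\max}-w\,\mathcal{C}_{\min}\big)$, and cancelling the common positive factor $w$ from numerator and denominator yields $r_w=r$.

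There is no real obstacle here; the argument is a two-line computation once one notices the state invariance. The only point requiring care is the hypothesis $w>0$: it is needed so that $1/w$ is defined, so that $\max$ and $\min$ are not interchanged under the rescaling, and so that the cancellation of $w$ in \eqref{eq:approx_ratio} is legitimate. It is worth remarking that the proof uses nothing about the depth $p$, the form of the mixer $B$, or the MaxCut structure of $\mathcal{C}$ — only the fact that $C_w=wC$ as operators, linearity of expectation, and positivity of $w$ — so the statement holds for arbitrary $p$ and for arbitrary objective functions, as claimed.
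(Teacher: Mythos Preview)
Your proposal is correct and follows essentially the same approach as the paper: first establish the state equality $\ket{\bm\beta,\bm\gamma/w,\mathcal{C}_w}=\ket{\bm\beta,\bm\gamma,\mathcal{C}}$ via $e^{-i(\gamma_l/w)C_w}=e^{-i\gamma_l C}$, then use $C_w=wC$ to scale the expectation and extrema by $w$ and cancel in \eqref{eq:approx_ratio}. The only cosmetic difference is that you phrase the expectation step through the shared measurement distribution $\Prob(z)$ from \eqref{eq:objective_expectation} rather than bra-ket algebra, and you make explicit where the hypothesis $w>0$ is used; neither changes the substance.
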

\begin{proof}

Let $C$ and $C_w$ be the diagonal Hamiltonians representing the objective functions $\mathcal{C}$ and $\mathcal{C}_w$, respectively. From the definition $C_w=wC$, 
\begin{align}
\ket{\bm \beta, \bm \gamma/w, \mathcal{C}_w} & = \prod_{l=1}^p e^{-i \beta_l B} e^{-i \frac{\gamma_l}{w} C_w} \vert+ \rangle = \prod_{l=1}^p e^{-i \beta_l B} e^{-i \gamma_lC} \vert+ \rangle  = \ket{\bm \beta, \bm \gamma, \mathcal{C}}.
\end{align}
The objectives for the two states are related by
\begin{align}
    \langle C_w(\bm \beta, \bm \gamma/w)\rangle & = \bra{\bm \beta, \bm \gamma/w, \mathcal{C}_w} C_w \ket{\bm \beta, \bm \gamma/w, \mathcal{C}_w} \nonumber \\
    & = w \bra{\bm \beta, \bm \gamma/w, \mathcal{C}_w} C \ket{\bm \beta, \bm \gamma/w, \mathcal{C}_w} = w \bra{\bm \beta, \bm \gamma, \mathcal{C}} C \ket{\bm \beta, \bm \gamma, \mathcal{C}} =  w\langle C(\bm \beta, \bm \gamma)\rangle 
\end{align}
The maximum and minimum objective values for $\mathcal{C}$ and $\mathcal{C}_w$ are related by the constant factor $w$, $w\mathcal{C}_{\max} = (\mathcal{C}_w)_{\max}$ and $w\mathcal{C}_{\min} = (\mathcal{C}_w)_{\min}$.  Thus the approximation ratios achieved by QAOA applied to $\mathcal{C}$ and $\mathcal{C}_w$ with parameters $(\bm\beta, \bm\gamma)$ and $(\bm\beta, \bm\gamma/w)$ respectively are identical:
\begin{align}
    r & = \frac{\bra{\bm \beta, \bm \gamma, \mathcal{C}} C \ket{\bm \beta, \bm \gamma, \mathcal{C}} - \mathcal{C}_{\min}}{\mathcal{C}_{\max}- \mathcal{C}_{\min}} = \frac{w}{w} \frac{\bra{\bm \beta, \bm \gamma/w, \mathcal{C}_w} C_w \ket{\bm \beta, \bm \gamma/w, \mathcal{C}_w} - (\mathcal{C}_w)_{\min}}{(\mathcal{C}_w)_{\max}-(\mathcal{C}_w)_{\min}} = r_w.
\end{align}
\end{proof}

\subsection{Parameter symmetries and the parameter distance}

QAOA parameters are known to exhibit a variety of symmetries \cite{zhou2020quantum,Lotshaw2021BFGS} that give identical objective values for symmetry-related parameters.  The first symmetry that is pertinent here is time-reversal symmetry, $(\bm \beta, \bm \gamma) \to (-\bm \beta, -\bm \gamma)$.  This is observed, for  example, in Fig.~\ref{fig:contours small} as a reflection about the origin. The second symmetry is periodicity in $\bm \beta$,  $(\bm \beta, \bm \gamma) \to (\bm \beta \pm \pi/2, \bm \gamma)$.

To define a meaningful angle distance and obtain our results in Fig.~\ref{fig:eucledian_distance_transf_and_cauchy} (left), we must account for these symmetries.  The reason is that two identical sets of parameters, up to symmetries, may have a large distance if they are in different symmetry sectors.  The distance is then an arbitrary factor, related to symmetry sector that is chosen in each instance rather than to characteristic differences in the parameters themselves. To avoid ambiguities in the distance that arise from the freedom to choose from among symmetry-related parameters, for each optimized $(\bm \beta_\mathrm{opt}, \bm \gamma_\mathrm{opt})$, we use the symmetries above to define a set of parameters $(\bm \beta'_\mathrm{opt}, \bm \gamma'_\mathrm{opt})$ that minimizes the distance $||\bm\beta_\mathrm{opt}' - \bm\beta_\mathrm{trans}||$.  This is done by considering all combinations of permutations of the component parameters $(\beta_\mathrm{opt})_l \to (\beta_\mathrm{opt})_l \pm \pi/2$ and time reversal symmetry on the joint $(\bm \beta_\mathrm{opt}, \bm \gamma_\mathrm{opt})$. The result is a unique distance that minimizes $||\bm\beta_\mathrm{opt}' - \bm\beta_\mathrm{trans}||$ from among all choices of symmetry-related parameters.

\section*{Disclaimer}

This paper was prepared for information purposes with contributions from the Future Lab for Applied Research and Engineering (FLARE) Group of JPMorgan Chase \& Co. and its affiliates, and is not a product of the Research Department of JPMorgan Chase \& Co. JPMorgan Chase \& Co. makes no explicit or implied representation and warranty, and accepts no liability, for the completeness, accuracy or reliability of information, or the legal, compliance, tax or accounting effects of matters contained herein. This document is not intended as investment research or investment advice, or a recommendation, offer or solicitation for the purchase or sale of any security, financial instrument, financial product or service, or to be used in any way for evaluating the merits of participating in any transaction.

\vfill
\center{
\framebox{\parbox{.85\linewidth}{
\footnotesize
The submitted manuscript has been created by UChicago Argonne, LLC, Operator of
Argonne National Laboratory (``Argonne''). Argonne, a U.S.\ Department of
Energy Office of Science laboratory, is operated under Contract No.\
DE-AC02-06CH11357. The manuscript is also authored by UT-Battelle, LLC under Contract No. DE-AC05-00OR22725 with the U.S. Department of Energy.
The U.S.\ Government retains for itself, and others acting on its behalf, a
paid-up nonexclusive, irrevocable worldwide license in said article to
reproduce, prepare derivative works, distribute copies to the public, and
perform publicly and display publicly, by or on behalf of the Government.  The
Department of Energy will provide public access to these results of federally
sponsored research in accordance with the DOE Public Access Plan.
http://energy.gov/downloads/doe-public-access-plan.}}}

\end{document}